\DeclareMathAlphabet{\mathpzc}{OT1}{pzc}{m}{it}
\newcommand{\R}{\mathbb{R}}
\newcommand{\E}{\mathbb{E}}
\theoremstyle{definition}
\newtheorem{definition}{Definition}
\newtheorem{lemma}{Lemma}
\newtheorem{theorem}{Theorem}
\newtheorem{remark}{Remark}
\newcommand{\norm}[1]{\left\lVert #1\right\rVert}
\newcommand{\set}[1]{\left\{ #1\right\}}
\DeclareMathOperator*{\argmin}{arg\,min}
\newcommand{\doi}[1]{\href{http://dx.doi.org/#1}{\normalsize{\textsc{doi:}}~\nolinkurl{#1}}}
\newcommand{\arxiv}[1]{\href{http://arxiv.org/abs/#1}{\normalsize{\textsc{arxiv:}}~\nolinkurl{#1}}}
\newcommand{\HRule}{\noindent\rule{\linewidth}{0.1mm}\newline}
\renewcommand{\epsilon}{\varepsilon}
\renewcommand{\phi}{\varphi}
\renewcommand{\implies}{\;\Rightarrow\;}
\newcommand{\gpdomain}{\mathcal{X}}
\newcommand{\gpaugdomain}{\bar{\mathcal{X}}}
\newcommand{\gpvar}{x}
\newcommand{\augu}{y}
\newcommand{\augU}{\mathcal{Y}}
\newcommand{\gpmuV}{\mu_{V}}
\newcommand{\gpsigmaV}{\sigma_{V}}
\newcommand{\gpmuB}{\mu_{B}}
\newcommand{\gpsigmaB}{\sigma_{B}}
\newcommand{\gpGramB}{C_{B}}
\newcommand{\socA}{Q(x)}
\newcommand{\socb}{\mathrm{r}(x)}
\newcommand{\socc}{\mathrm{w}(x)}
\newcommand{\socd}{\mathrm{v}(x)}
\newcommand{\eigval}{\lambda_\dagger}
\newcommand{\eigvec}{e_\dagger}
\newcommand{\fc}[1]{{\color{black} #1}}
\title{\LARGE \bf
% key word: Gaussian Process, Control Barrier Function, Feasibility, Model Uncertainty, Optimization(Optimal), uncertain input effect, uncertainty-aware

Pointwise Feasibility of Gaussian Process-based Safety-Critical Control under Model Uncertainty
}
\author{Fernando Castañeda*, Jason J. Choi*, Bike Zhang, Claire J. Tomlin, and Koushil Sreenath% <-this % stops a space
% \thanks{*Indicates equal contribution.}% <-this % stops a space
\thanks{*Indicates equal contribution.}%
\thanks{The authors are with the University of California, Berkeley, CA, 94720, USA, \tt\small \{fcastaneda, jason.choi, bikezhang, tomlin, koushils\}@berkeley.edu}%%
\thanks{This work was partially supported through National Science Foundation Grant CMMI-1931853, and DARPA Assured Autonomy program, grant FA8750-18-C-0101. The work of Fernando Casta\~neda received the support of a fellowship from Fundaci\'on Rafael del Pino, Spain.}
}
\begin{document}

\maketitle
\thispagestyle{empty}
\pagestyle{empty}

%%%%%%%%%%%%%%%%%%%%%%%%%%%%%%%%%%%%%%%%%%%%%%%%%%%%%%%%%%%%%%%%%%%%%%%%%%%%%%%%
\begin{abstract}
Control Barrier Functions (CBFs) and Control Lyapunov Functions (CLFs) are popular tools for enforcing safety and stability of a controlled system, respectively.
They are commonly utilized to build constraints that can be incorporated in a min-norm quadratic program (CBF-CLF-QP) which solves for a safety-critical control input.
However, since these constraints rely on a model of the system, when this model is inaccurate the guarantees of safety and stability can be easily lost.
In this paper, we present a Gaussian Process (GP)-based approach to tackle the problem of model uncertainty in safety-critical controllers that use CBFs and CLFs. The considered model uncertainty is affected by both state and control input.
We derive probabilistic bounds on the effects that such model uncertainty has on the dynamics of the CBF and CLF. We then use these bounds to build safety and stability chance constraints that can be incorporated in a min-norm convex optimization-based controller, called GP-CBF-CLF-SOCP.
As the main theoretical result of the paper, we present necessary and sufficient conditions for pointwise feasibility of the proposed optimization problem.
We believe that these conditions could serve as a starting point towards understanding what are the minimal requirements on the distribution of data collected from the real system in order to guarantee safety. 
Finally, we validate the proposed framework with numerical simulations of an adaptive cruise controller for an automotive system.
\end{abstract}

\section{Introduction}
\label{sec:01introduction}

\subsection{Motivation}

Guaranteeing stability and safety is of major importance for the deployment of autonomous systems in the real world. % Model-based controllers provide a theoretical framework for this purpose,
Theory for model-based controllers for this purpose is well established \cite{ames2019control}.
However, the models that these controllers rely on are usually inaccurate. Real systems can be hard to model, might have parameters that can be challenging to identify, or both. In the presence of such model uncertainty, the theoretical guarantees of stability and safety that model-based controllers provide can be lost.
On the other hand, data-driven control techniques have the potential to solve difficult control tasks and are becoming increasingly popular.
However, guarantees of safety or stability for these approaches typically require rich enough data to fully characterize the dynamics of the system \cite{deepc, pilco, verhaegen2007filtering}. In practice, collecting these data might not be feasible for systems with uncertain dynamics and uncertain input effects. For instance, it might require applying unstable control inputs that would lead to unsafe behavior and possible damage of the system. Providing safety guarantees for such uncertain systems while using only data that is feasible to collect remains an open problem \cite{taylor2020towards}.

This paper takes important steps towards solving this problem by combining a model-based method with a data-driven approach.
\fc{First, we build on our previous work \cite{GPCLFSOCP} to devise an optimization-based controller that can address model uncertainty in safety constraints.
We then conduct a feasibility analysis of this controller, which results in a set of conditions on the uncertainty characterization that are necessary and sufficient
to probabilistically guarantee safety pointwise in time.
Indeed, we believe that this analysis is key to verifying whether the available data is sufficient to achieve our controller's safety premise.
We therefore view this work as the potential foundation for many practical data-driven safety-control frameworks, for instance, to guide the online data collection process of safe learning algorithms while maintaining safety.
}

\subsection{Related Work}
Control Barrier Functions (CBFs, \cite{ames2014cbf}) and Control Lyapunov Functions (CLFs, \cite{artstein}) have been popularly used to maintain safety or achieve stability, respectively, of a controlled system.
Model uncertainty of CBF-based controllers was first explored by the use of adaptive \cite{nguyen2017robust, taylor2020adaptive} and robust \cite{nguyen2021robust, jankovic2018robust} control techniques.
Recent work has demonstrated that CBF-based controllers combined with data-driven methods can be useful for coping with model uncertainty. With neural networks it is often difficult to obtain rigorous guarantees for reliable safety controllers, although several works show practical outcomes \cite{choi2020reinforcement, westenbroek2021combining, taylor2020cbf}.
Gaussian Process (GP) regression is an alternative approach that provides a probabilistic guarantee of its prediction.
CLF or CBF-based controllers that use GP regression to learn the model uncertainty terms have been proposed before \cite{berkenkamp2016lyapunov, fan2019balsa, cheng2020safe}.
However, these works do not consider uncertainty in the input effects, which is significant for many systems \cite{GPCLFSOCP}. In \cite{dhiman2020control}, this problem is tackled by using Matrix-Variate GP regression.
\fc{Finally, \cite{lederer2021impact} analyzes how the choice of data affects the GP prediction uncertainty and the performance of their learning-based stabilizing controller.
However,  by  fixing  a closed-loop policy, the uncertainty is only characterized with respect to the state, but not the control input.

}

\subsection{Contributions}
\fc{First, we extend our previous work \cite{GPCLFSOCP}, where we used GP regression to address model uncertainty in CLFs, to the safety-critical control case by including an uncertainty-aware CBF chance constraint.
This results in the formulation of the so-called \textit{Gaussian Process-based Control Barrier Function and Control Lyapunov Function Second-Order Cone Program (GP-CBF-CLF-SOCP)} controller. The main theoretical result of this paper is a set of necessary, sufficient, and necessary and sufficient conditions for pointwise feasibility---feasibility at each timestep---of the proposed GP-CBF-CLF-SOCP controller.
The interpretation of these conditions as an interplay between achieving the safety goal and reducing the GP prediction error constitutes the main theoretical insight derived in this paper.}

\section{Background}
\label{sec:02background}
In this section, we present the necessary background on CLFs and CBFs.

Consider a nonlinear control-affine system of the form:
\vspace{-5pt}
\begin{equation}\label{eq:system}
    \dot{x} = f(x) + g(x)u,
    \vspace{-5pt}
\end{equation}
where $x \in \mathcal{X} \subset \R^n$ is the system state and $u \in \mathbb{R}^m$ is the control input. 
The vector fields $f$ and $g$ are assumed to be locally Lipschitz continuous, and without loss of generality we assume that the origin is the equilibrium point, $f(0) = 0$,
that we want the system state to converge to.
We will refer to system \eqref{eq:system} as the \emph{true plant}.

%%%%%%%%%%%%%%%%%%%%%%%%%%%%%%%%%%%%%%%%%%%%%%%%%%%%%%%%%%%%
\subsection{Control Lyapunov Functions}
\label{subsec:0201clf}

\begin{definition}
\label{def:clf}
Let $V \colon \mathcal{X} \to \R_{+}$ be a positive definite, continuously differentiable, and radially unbounded function. $V$ is a \emph{Control Lyapunov Function} (CLF) for system \eqref{eq:system} if for each $x \in \mathcal{X} \setminus \set{0}$ the following holds:
\vspace{-3pt}
\begin{equation}\label{eq:clf_def}
    \inf_{u \in \mathbb{R}^m} \underbrace{L_f V(x) + L_g V(x)u}_{= \dot{V}(x,u)} < 0,
\vspace{-3pt}
\end{equation}
where $L_f V(x)\!\coloneqq\!\nabla V(x)\!\cdot\!f(x)$ and $L_g V(x)\!\coloneqq\!\nabla V(x)\!\cdot\!g(x)$ are Lie derivatives of $V$ with respect to $f$ and $g$, respectively.
\end{definition}
If system \eqref{eq:system} admits such a CLF, it is globally asymptotically stabilizable to the origin \cite{artstein}.
CLFs can also impose a stronger notion of stabilizability, which is the exponential convergence to the origin.
If there exists a compact subset $D \subseteq \mathcal{X}$ that includes the origin such that for all $ x \in D$ and for some constant $\lambda>0$, it holds that
\vspace{-3pt}
\begin{equation}
\label{eq:expclf_local_def}
    \inf_{u \in \R^m} L_f V(x) + L_g V(x)u + \lambda V(x) \leq 0,
    \vspace{-3pt}
\end{equation}
and $\exists \ c_{exp}>0$ such that $ \Omega_{c_{exp}}\!\coloneqq \!\{x\!\in\!D \subseteq\mathcal{X}: V(x)\!\leq\!c_{exp}\}$ is a sublevel set of $V(x)$, then the origin is locally exponentially stabilizable from $\Omega_{c_{exp}}$. In this case, we say that $V$ is a \textit{locally exponentially stabilizing} CLF. Condition
\eqref{eq:expclf_local_def} can be used as a constraint in a min-norm quadratic program (QP) \cite{ames2013clfqp}:

\small
\HRule
\noindent \textbf{CLF-QP}:
\begin{subequations}
\label{eq:clf-qp-all}
\begin{align}
u^{*}(x) & = & & \underset{u\in \R^m}{\argmin}  \quad \norm{u}_2^2 \label{eq:clf-qp}\\
& \text{s.t.} & & L_f V(x) + L_g V(x)u + \lambda V(x) \leq 0. \label{eq:eclf}
\end{align}
\end{subequations}
\hrule
\normalsize
\vspace{2mm}
The above QP defines a min-norm feedback control law $u^* \colon \mathcal{X} \to \R^m$ that renders the origin exponentially stable.

%%%%%%%%%%%%%%%%%%%%%%%%%%%%%%%%%%%%%%%%%%%%%%%%%%%%%%%%%%%%
\subsection{Control Barrier Functions}
\begin{definition}
\label{def:cbf} Consider a continuously differentiable function $B: \mathcal{X} \subset \R^n \to \R$ and a set $\mathcal{C}$ defined as the superlevel set of $B$, $\mathcal{C} = \{ x \in \mathcal{X}: B(x) \geq 0 \}$. $B$ is a \emph{Control Barrier Function} (CBF) for system \eqref{eq:system} if there exists an extended class $\mathcal{K}_\infty$ function $\gamma$ such that for all $x \in \mathcal{X}$,
\vspace{-3pt}
\begin{equation}
    \label{eq:cbf_def}
    \sup_{u \in \R^m} \underbrace{L_f B(x) + L_g B(x) u}_{= \dot{B}(x,u)} + \gamma (B(x)) \geq 0.
    \vspace{-3pt}
\end{equation}
\end{definition}
If $B$ is a CBF for system \eqref{eq:system} and {\small $\nabla B(x) \neq 0$} for all $x \in \partial \mathcal{C}$, any Lipschitz continuous control input $u$ satisfying \eqref{eq:cbf_def} renders the set $\mathcal{C}$ forward invariant \cite[Thm. 2]{ames2019control}.
In \cite{ames2014cbf}, a QP formulation of a safety-critical controller is proposed by incorporating both conditions \eqref{eq:expclf_local_def} and \eqref{eq:cbf_def} as constraints:

\small
\HRule
\noindent \textbf{CBF-CLF-QP}:
\begin{subequations}
\label{eq:cbf-clf-qp-all}
\begin{align}
u^{*}(x) & = & & \underset{(u, d)\in \R^{m+1}}{\argmin}  \quad \norm{u}_2^2 + p d^2 \label{eq:cbf-clf-qp}\\
& \text{s.t.} & & L_f V(x) + L_g V(x)u + \lambda V(x) \leq d, \label{eq:cbfclf-constraint1} \\
&  & & L_f B(x) + L_g B(x)u + \gamma (B(x)) \geq 0, \label{eq:cbfclf-constraint2} \vspace{-.5em}
\end{align}
\end{subequations}
\hrule
\normalsize
\vspace{2mm}
\noindent where $d$ is a slack variable used to relax the CLF constraint in order to give preference to safety over stability in case of conflict.
In this paper, we will refer to \eqref{eq:cbfclf-constraint1} as the \textit{(relaxed) CLF constraint} and to \eqref{eq:cbfclf-constraint2} as the \textit{CBF constraint}.
Note that the Lie derivatives of $V$ and $B$, which appear in the these constraints, require explicit knowledge of the dynamics of the plant.

%%%%%%%%%%%%%%%%%%%%%%%%%%%%%%%%%%%%%%%%%%%%%%%%%%%%%%%%%%%%
\subsection{Adverse Effects of Model Uncertainty}
\label{subsec:02adverse-effects}

We now provide some necessary settings and assumptions for our problem formulation.
Let's assume that we have a \textit{nominal model}: \vspace{-.3em}
\begin{equation}
    \label{eq:nominal-model}
    \dot{x} = \tilde{f}(x) + \tilde{g}(x)u, \vspace{-.3em}
\end{equation}
where $\tilde{f}$ and $\tilde{g}$ are Lipschitz continuous vector fields and $\tilde{f}(0)=0$. In general, the nominal model vector fields ($\tilde{f}$, $\tilde{g}$) do not perfectly match the true plant vector fields ($f$, $g$) due to model uncertainty.

We start by designing a locally exponentially stabilizing CLF $V$ and a CBF $B$, based on the nominal model \eqref{eq:nominal-model}. These functions are assumed to be a locally exponentially stabilizing CLF and a CBF, respectively, also for the true plant \eqref{eq:system}. This is a structural assumption, and it is met for feedback linearizable systems if the nominal model has the same degree of actuation as the true plant.
A more detailed explanation can be found in \cite{taylor2020towards}. Finally, we also assume that we can measure the state $x$.

The main objective of the paper is to learn the CLF constraint \eqref{eq:cbfclf-constraint1} and the CBF constraint \eqref{eq:cbfclf-constraint2} for the true plant, in order to be able to synthesize safe and stabilizing controllers.
Note that the actual derivatives of $V$ and $B$ depend on the true plant dynamics:
\vspace{-1em}

\small
\begin{align*}
    \dot{V}(x,u)\!=\!L_f V(x)\!+\!L_g V(x)u, \quad \dot{B}(x,u)\!=\!L_f B(x)\!+\!L_g B(x)u.
\end{align*}
\normalsize

\noindent However, the nominal model-based estimates of their values are given by:
\vspace{-1em}

\small
\begin{align*}
    \tilde{\dot{V}}(x, u)\!=\!L_{\tilde{f}}V(x)\!+\!L_{\tilde{g}}V(x) u, \quad \tilde{\dot{B}}(x, u)\!=\!L_{\tilde{f}}B(x)\!+\!L_{\tilde{g}}B(x) u,
\end{align*}
\normalsize
\noindent and can differ from the true values. We define $\Delta_V,\ \Delta_B: \mathcal{X} \times \R^m \rightarrow \R$ as the errors of the nominal model-based estimates:
\vspace{-.5em}
\begin{align}
    \Delta_V(x, u) := \dot{V}(x, u) - \tilde{\dot{V}}(x, u), \label{eq:mismatch_clf} \\
    \Delta_B(x, u) := \dot{B}(x, u) - \tilde{\dot{B}}(x, u). \label{eq:mismatch_cbf} \vspace{-1em}
\end{align}
Then, the CLF \eqref{eq:cbfclf-constraint1} and CBF \eqref{eq:cbfclf-constraint2} constraints become \vspace{-.4em}
\begin{align}
    \label{eq:clf-constraint-uncertainty}
    & L_{\tilde{f}}V(x) + L_{\tilde{g}}V(x) u + \Delta_V(x, u) + \lambda V(x)\le d, \\
    \label{eq:cbf-constraint-uncertainty}
    & L_{\tilde{f}}B(x) + L_{\tilde{g}}B(x) u + \Delta_B(x, u) + \gamma(B(x)) \ge 0. 
    \vspace{-1em}
\end{align}
Therefore, by learning the uncertainty terms $\Delta_V(x, u)$ and $\Delta_B(x, u)$ correctly, we can have a good representation of the true CLF and CBF constraints, which can be used to construct a safe stabilizing controller for the true plant. Note that the uncertainty terms $\Delta_V(x, u)$ and $\Delta_B(x, u)$ are scalar values, leading to a lower dimensional learning problem than if we learned the dynamics $f$, $g$. We can approximately measure $\Delta_V(x, u)$ and $\Delta_B(x, u)$ by collecting trajectories from the true plant. These measurements are given by \vspace{-.5em}
\begin{align}
    & z_j^{V} = {\left(V(x(t+\small{\Delta}t))-V(x(t))\right)} / {\small{\Delta}t} - \tilde{\dot{V}}(x_j, u_j), \\
    & z_j^{B} = {\left(B(x(t+\small{\Delta}t))-B(x(t))\right)} / {\small{\Delta}t} - \tilde{\dot{B}}(x_j, u_j), \vspace{-.5em}
\end{align}
where $x_j = {(x(t+\small{\Delta}t)\!+\!x(t))}/2$ is the mean of the state between $[t, t\!+\!\Delta t)$, and $u_j$ is the control input during the same interval; $\Delta t$ is the sampling interval; $z_j^{V}$ and $z_j^{B}$ are approximate measurements of $\Delta_V(x_j, u_j)$ and $\Delta_B(x_j, u_j)$, for $j=1,\ldots,N$. Based on these data, we can define the regression problems for $\Delta_V$ and $\Delta_B$ as supervised learning problems. We will specifically use GP regression.

We close this section by revealing the control-affine structures of $\Delta_V$ and $\Delta_B$. In \eqref{eq:mismatch_clf} and \eqref{eq:mismatch_cbf}, if we express $\dot{V},\ \dot{B}$ and $\tilde{\dot{V}},\ \tilde{\dot{B}}$ with their respective Lie derivatives, we get
\vspace{-1em}

{\setlength{\belowdisplayskip}{0pt}
\small
\begin{align}
    \Delta_V(x, u) = \Phi_V \begin{bmatrix}1 \\ u\end{bmatrix} := (L_f V \!-\!L_{\tilde{f}}V)(x) + (L_g V\!-\!L_{\tilde{g}}V)(x)u, \label{eq:mismatchaffine_clf}
 \\
    \Delta_B(x, u) = \Phi_B \begin{bmatrix}1 \\ u\end{bmatrix} := (L_f B \!-\!L_{\tilde{f}}B)(x) + (L_g B\!-\!L_{\tilde{g}}B)(x)u, \label{eq:mismatchaffine_cbf} \vspace{-.5em}
\end{align}
}
% \vspace{-.5em}
\normalsize

\noindent where $\Phi_V, \Phi_B \in \R^{1\times(m+1)}$. To simplify notations, we will introduce $\augu:=[1 \;\ u^T]^T$. 
Then, both $\Delta_V$ and $\Delta_B$ are obtained by taking a dot product between a $(m+1)$ dimensional vector field ($\Phi_V$ or $\Phi_B$) and $\augu$. This property will be used to construct a suitable kernel structure for the GP regression in the next section. 

% \newpage

\section{Gaussian Process Regression}
\label{sec:03gp}

This section gives an overview of Gaussian Processes and how they are used for regression in this paper. We utilize a compound kernel, which was introduced in our previous work \cite{GPCLFSOCP}, to exploit the affine nature of the problem.

\subsection{Gaussian Processes}
A Gaussian Process (GP) is a type of random process such that any finite collection of its samples always has a joint Gaussian distribution. It is characterized by two functions: a mean function $q: \gpdomain\rightarrow\R$ and a covariance function $k: \gpdomain\times\gpdomain\rightarrow\R$, where $\gpdomain$ is the input domain of the process. We express the process as: \vspace{-.3em}
\begin{equation}
\label{eq:gp_def}
    h(\gpvar) \sim \mathcal{GP}(q(\gpvar), k(\gpvar,\gpvar')), \vspace{-.3em}
\end{equation}
where $h(\cdot)$ is a sample drawn from it, and for any $\gpvar, \gpvar' \in \gpdomain$
\begin{equation*}
    \E[h(\gpvar)] = q(\gpvar), \; \E[(h(\gpvar)-q(\gpvar))(h(\gpvar')-q(\gpvar'))] = k(\gpvar, \gpvar').
\end{equation*}
Since the joint distribution of $h(\gpvar)$ and $h(\gpvar')$ is Gaussian, the functions $q$ and $k$ fully characterize the process. Note that the covariance function $k$ determines the covariance matrix of the joint Gaussian distribution of finite samples. Therefore, only those functions that lead to positive semidefinite covariance matrices can be used for $k$. If $k$ satisfies this requirement, it is said to be a positive definite kernel \cite{wendland2004scattered}.

Given a set of finite measurements of the form $\{(\gpvar_j,h(\gpvar_j)+\epsilon_j)\}_{j=1}^{N}$, where $\epsilon_j \sim \mathcal{N}(0, \sigma_{n}^2)$ is white measurement noise, and a query point $\gpvar_{*}$, a posterior distribution for $h(\gpvar_{*})$ can be derived from the Gaussian distribution of $[h(\gpvar_1), \cdots, h(\gpvar_N), h(\gpvar_{*})]^T$ conditioned on the measurements. This can be used as a prediction of $h(\gpvar_{*})$ at a query point $\gpvar_*$, with mean and variance
\vspace{-.5em}
\begin{equation}
\label{eq:gpposteriormu}
        \mu_{*} = \mathbf{z}^T (K + \sigma_n^2 I )^{-1} K_{*}^{T}, \vspace{-.5em}
\end{equation}
\begin{equation}
\label{eq:gpposteriorsigma}
    \sigma_{*}^{2} = k\left(\gpvar_{*}, \gpvar_{*}\right)-K_{*}  (K + \sigma_n^2 I )^{-1} K_{*}^{T}, \vspace{-.3em}
\end{equation}
where $K\in\R^{N\times N}$ is the Gram matrix whose $(i, j)^{th}$ element is $k(\gpvar_i, \gpvar_j)$, $K_{*}=[k(\gpvar_*, \gpvar_1),\ \cdots \ ,k(\gpvar_*, \gpvar_N)]\in\R^N$, and $\mathbf{z}\in \R^N$ is the vector containing the output measurements $z_j:=h(\gpvar_j)+\epsilon_j$. Here, we used $q(\gpvar)\!\equiv\!0$ as the mean function in \eqref{eq:gp_def}. The fact that this closed form expression of the prediction is available makes GPs attractive for many regression problems \cite{williams2006gaussian}. 

\subsection{GP Regression with the Affine Dot Product Kernel}

Our goal is to use GP regression for $\Delta_{V}(x, u)$ and $\Delta_{B}(x, u)$. Note that the input domain is now $\mathcal{X} \times \R^{m+1}$ instead of $\gpdomain$, where $\R^{m+1}$ is the space of $\augu=[1 \;\; u^T]^T$.
We define $\gpaugdomain := \mathcal{X} \times \R^{m+1}$. Even though any positive definite kernel in $\gpaugdomain$ can be used for the regression, we would like to exploit the fact that both $\Delta_{V}$ and $\Delta_{B}$ are affine in $\augu$.
The following compound kernel structure is used for this goal.
\begin{definition} \label{def:adpkernel}
\emph{Affine Dot Product Compound Kernel: }
Define $k_{c}:\gpaugdomain \times \gpaugdomain \rightarrow \R$ given by
\vspace{-1em}

\small
\begin{equation}
    k_{c}\left(\left[\begin{array}{c} x \\ y \\\end{array}\right], \left[\begin{array}{c} x' \\ y' \\\end{array}\right]\right) := y^T Diag([k_1(x, x'), \cdots, k_{m+1}(x, x')]) y'
\label{eq:adpkernel}
\end{equation}
% }
\normalsize

\noindent as the \emph{Affine Dot Product} (ADP) compound kernel of $(m\!+\!1)$ individual kernels $k_1,\ldots ,$ $k_{m+1}:\mathcal{X}\times \mathcal{X} \rightarrow \R$ \cite{GPCLFSOCP}.
\end{definition}

\fc{The above compound kernel captures the appropriate structure of the target functions, which results in a much better regression fit compared to using arbitrary kernels. 
Moreover, it results in expressions for the mean and variance of the GP regression that are linear and quadratic in the control input, respectively.}
This is crucial for the construction of the convex min-norm controller that will be introduced in the next section. Let $X\in\R^{n\times N}$ and $Y\in \R^{(m+1) \times N}$ be matrices whose column vectors are the inputs $x_j$ and $\augu_j$ of the collected data, respectively. Then, when using the ADP compound kernel, Equations \eqref{eq:gpposteriormu} and \eqref{eq:gpposteriorsigma} give the following expressions for the mean and variance of the GP prediction at a query point $(x_{*}, \augu_{*})$:
\vspace{-3pt}
\begin{equation}
\label{eq:mu_adp}
    \mu_{*} = \underbrace{\mathbf{z}^T (K_c + \sigma_n^2 I )^{-1} K_{*Y}^{T}}_{=:b_{*}^T} y_{*},
\end{equation}
\vspace{-5pt}
\begin{equation}
\label{eq:sigma_adp}
\small{
    \sigma_{*}^{2} \!= \!y_{*}^{T}\!\underbrace{\left(\!Diag \text{\footnotesize $\left(\!\begin{bmatrix}k_1(x_{*}, x_{*}) \\ \vdots \\ k_{m+1}(x_{*}, x_{*}) \end{bmatrix}\!\right)$} \!-\!K_{*Y}\!(K_c + \sigma_n^2 I )^{-1} K_{*Y}^{T}\! \right)}_{=:C_{*}}\!y_{*}.
}
\end{equation}
Here, $K_c\in\R^{N\times N}$ is the Gram matrix of $k_c$ for the training data inputs ($X,Y$), and $K_{*Y}\in\R^{(m+1)\times N}$ is given by
\begin{equation*}
\small{
    K_{*Y}\!=\!\begin{bmatrix} K_{1*} \\ K_{2*} \\ \vdots \\K_{(m+1)*}
    \end{bmatrix}\!\circ\!Y,\;\;K_{i*}\!=\![k_i(x_{*}, x_1),\cdots, k_i(x_{*}, x_N)].
}
\end{equation*}

\subsection{Derivation of Confidence Bounds for $\Delta_V$ and $\Delta_B$}

In regression problems, the set of candidate functions, which is a design choice, determines the functions' expressivity at the cost of the complexity of the problem. In the machine learning literature, a Reproducing Kernel Hilbert Space (RKHS, \cite{wendland2004scattered}) is a common choice for this set since it encompasses a space of well-behaved and expressive-enough functions. An RKHS, denoted as $\mathcal{H}_k(\gpaugdomain)$, is characterized by a specific positive definite kernel $k$. The kernel $k$ evaluates whether a member of $\mathcal{H}_k(\gpaugdomain)$ satisfies a specific property, namely a ``reproducing" property: the inner product between any member $h\in\mathcal{H}_k(\mathcal{\gpaugdomain})$ and the kernel $k(\cdot, \bar{x})$ should reproduce $h$, i.e., $\langle h(\cdot), k(\cdot, \bar{x}) \rangle_{k} = h(\bar{x}), ~\forall \bar{x}\in \gpaugdomain$. Furthermore, the RKHS norm $\norm{h}_{k}:=\sqrt{\langle h, h \rangle}_{k}$
is a measure of how ``well-behaved'' \footnote{$\norm{h(\bar{x})-h(\bar{x}')}_{2}\le\norm{h}_{k}\norm{k(\bar{x},\cdot)-k(\bar{x}',\cdot)}_{k} \; \forall \bar{x}, \bar{x}' \in \mathcal{\gpaugdomain}$} the function $h\in\mathcal{H}_k(\mathcal{\gpaugdomain})$ is.

It turns out that if a target function $h$ is a member of $\mathcal{H}_k(\gpaugdomain)$ with bounded RKHS norm, then a confidence bound for the true value of $h(\bar{x})$ can be specified by the mean and variance of the GP prediction \cite{gpucb}. We can apply this confidence bound analysis to GP regression problems that use the ADP compound kernel $k_c$ \eqref{eq:adpkernel} as the reproducing kernel. Let $\Delta$ be either $\Delta_{V}$ or $\Delta_{B}$, and $\Phi$ be either $\Phi_V$ or $\Phi_B$ from \eqref{eq:mismatchaffine_clf} and \eqref{eq:mismatchaffine_cbf}. The following theorem provides a probabilistic bound on the estimation error of $\Delta$ from the collected data.

\begin{theorem}
\label{theorem:DeltaUCB}
\cite[Thm.~2]{GPCLFSOCP} Consider $m\!+\!1$ bounded kernels $k_{i}$, for $i\!=\!1,\ldots,(m+1)$. Assume that each $i$-th element of $\Phi$ is a member of $\mathcal{H}_{k_{i}}$ with bounded RKHS norm. We also assume that we have access to measurements $z_i = \Delta(x_i, u_i) + \epsilon_i$, and that each noise term $\epsilon_i$ is zero-mean and uniformly bounded by $\sigma_{n}$. Let $\beta \coloneqq \left(2\eta^2 + 300 \kappa_{N+1} \ln^3((N+1)/\delta)\right)^{0.5}$, with $N$ the number of data points, $\eta$ the bound of $\norm{\Delta}_{k_c}$, and $\kappa_{N+1}$ the maximum information gain after getting $N+1$ data points. Let $\mu_{*}$ and $\sigma^2_{*}$ be the mean \eqref{eq:mu_adp} and variance \eqref{eq:sigma_adp} of the GP regression for $\Delta$, using the ADP compound kernel $k_c$ of $k_1,\cdots,k_{m+1}$, at a query point $(x_{*},\ \augu_{*}=[1\ u_{*}^T]^T)$, where $\augu_{*}$ is an element of a bounded set $\augU\subset\R^{m+1}$. Then, with a probability of at least $1-\delta$ the following holds for all $N\geq 1$, $x_{*} \in \mathcal{X}$, $[1\ u_{*}^T]^T \in \augU$:
\vspace{-3pt}
\begin{equation}
\label{eq:sigmaUCB}
    | \mu_{*} - \Delta(x_{*},u_{*}) | \leq \beta \sigma_{*}.
\vspace{-3pt}
\end{equation}
\end{theorem}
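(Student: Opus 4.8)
The plan is to reduce this statement directly to the known GP-UCB confidence bound of Srinivas et al.\ \cite{gpucb}, applied to the single scalar-valued target $\Delta$ with the compound kernel $k_c$ serving as the reproducing kernel. The key observation is that $\Delta$, defined on $\gpaugdomain = \mathcal{X}\times\R^{m+1}$, is a genuine member of $\mathcal{H}_{k_c}(\gpaugdomain)$: writing $\Delta(x,y) = \Phi(x)\,y = \sum_{i=1}^{m+1}\Phi_i(x)\,y_i$ with each $\Phi_i \in \mathcal{H}_{k_i}(\mathcal{X})$, one checks that the reproducing property of $k_c$ in \eqref{eq:adpkernel} is satisfied by this $\Delta$ and that $\norm{\Delta}_{k_c}^2 = \sum_{i=1}^{m+1}\norm{\Phi_i}_{k_i}^2 < \infty$. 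Hence the hypothesis ``$\Phi_i \in \mathcal{H}_{k_i}$ with bounded RKHS norm'' is exactly what is needed to guarantee $\Delta \in \mathcal{H}_{k_c}$ with $\norm{\Delta}_{k_c} \le \eta$. I would state this membership and norm identity as the first step, since everything downstream is then a black-box application of a standard theorem.

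Second, I would verify that the remaining hypotheses of the GP-UCB bound hold in our setting: $k_c$ is a bounded positive definite kernel on $\gpaugdomain$ (boundedness of each $k_i$ on $\mathcal{X}$, together with the restriction of the $y$-argument to the bounded set $\augU$, makes $k_c(\bar x,\bar x) = y^T \mathrm{Diag}([k_1(x,x),\dots,k_{m+1}(x,x)])\,y$ uniformly bounded — this is why the theorem insists $\augu_* \in \augU$); the noise terms $\epsilon_i$ are zero-mean and bounded by $\sigma_n$; and $\mu_*,\sigma_*$ as given in \eqref{eq:mu_adp}–\eqref{eq:sigma_adp} are precisely the posterior mean and standard deviation obtained by instantiating the generic GP formulas \eqref{eq:gpposteriormu}–\eqref{eq:gpposteriorsigma} with $k=k_c$ at the query point $(x_*,\augu_*)$. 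With these checks in place, the GP-UCB theorem yields, for the choice $\beta = (2\eta^2 + 300\,\kappa_{N+1}\ln^3((N+1)/\delta))^{1/2}$, that $|\mu_* - \Delta(x_*,u_*)| \le \beta\,\sigma_*$ holds simultaneously for all $N\ge 1$ and all query points in $\mathcal{X}\times\augU$ with probability at least $1-\delta$, which is the claim.

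I expect the main obstacle to be the first step — rigorously establishing $\Delta \in \mathcal{H}_{k_c}$ and the additive norm identity $\norm{\Delta}_{k_c}^2 = \sum_i \norm{\Phi_i}_{k_i}^2$. This requires identifying $\mathcal{H}_{k_c}$ as (isometrically isomorphic to) the space of functions $(x,y)\mapsto \sum_i \psi_i(x)\,y_i$ with $\psi_i \in \mathcal{H}_{k_i}$, equipped with the sum-of-squares norm; one proves this by checking that this candidate space is a Hilbert space of functions on $\gpaugdomain$ in which $k_c(\cdot,(x',y'))$ reproduces point evaluation, and then invoking uniqueness of the RKHS associated to a given kernel. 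This is the only place where the compound-kernel structure of Definition \ref{def:adpkernel} does real work; once it is done, the bound on $\kappa_{N+1}$ and the precise form of $\beta$ are inherited verbatim from \cite{gpucb}, and the rest is bookkeeping. Since this result is quoted as \cite[Thm.~2]{GPCLFSOCP}, I would present the proof as a short reduction, citing that reference for the full details of the RKHS identification and of the information-gain argument.
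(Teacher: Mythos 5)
Your proposal is correct and follows essentially the same route the paper relies on: the paper gives no standalone proof but defers to \cite[Thm.~2]{GPCLFSOCP} and \cite[Thm.~6]{gpucb}, and those references proceed exactly as you describe --- identifying $\mathcal{H}_{k_c}$ with the space of functions $(x,y)\mapsto\sum_i\psi_i(x)y_i$, $\psi_i\in\mathcal{H}_{k_i}$, so that $\Delta\in\mathcal{H}_{k_c}$ with bounded norm, and then invoking the GP-UCB confidence bound with the stated $\beta$. You have also correctly isolated the one substantive step (the RKHS identification and norm identity) versus the bookkeeping.
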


\noindent For technical details of Theorem \ref{theorem:DeltaUCB}, please refer to \cite[Thm.~6]{gpucb}, and our previous paper \cite{GPCLFSOCP}.

\section{Proposed Safety Controller}
\label{sec:04controllers}
In this section, we make use of the probability bounds given by Theorem \ref{theorem:DeltaUCB} to build model uncertainty-aware CBF and CLF chance constraints that can be incorporated in a min-norm optimization problem.

From Theorem \ref{theorem:DeltaUCB}, each of the following inequalities hold with a probability of at least $1-\delta$ for $\forall x \!\in\!\mathcal{X}$, $[1\ u^T]^T \in \augU$:
{\setlength{\belowdisplayskip}{2pt}
\setlength{\abovedisplayskip}{2pt}
\begin{align}
\vspace{-8pt}
\label{eq:V_UCB}
    \dot{V}(x,u) & \leq \tilde{\dot{V}}(x,u) + \gpmuV(x,u) + \beta \gpsigmaV(x,u),\\
\label{eq:B_UCB}
    \dot{B}(x,u) & \geq \tilde{\dot{B}}(x,u) + \gpmuB(x,u) - \beta \gpsigmaB(x,u),
    \vspace{-2pt}
\end{align}
% \vspace{-5pt}
where $\gpmuV(x,u),\ \gpsigmaV(x,u)$ are the mean and standard deviation of the GP prediction of $\Delta_V$ at the query point $(x,u)$, and $\gpmuB(x,u),\ \gpsigmaB(x,u)$ are the mean and standard deviation of the GP prediction of $\Delta_B$ at the query point $(x,u)$, given by Equations \eqref{eq:mu_adp} and \eqref{eq:sigma_adp}. 
}

In our previous work \cite{GPCLFSOCP}, we construct a Second-Order Cone Program (SOCP) by using \eqref{eq:V_UCB} to build an exponential CLF chance constraint. This SOCP defines a pointwise min-norm stabilizing feedback control law $u^* \colon \R^n \to \R^m$:

\small
% \HRule
\vspace{2mm}
\hrule
\vspace{2mm}
\noindent \textbf{GP-CLF-SOCP}:
\begin{align}
u^{*}(x) & = & & \underset{u\in \R^m}{\argmin} \norm{u}_2^2
\label{eq:gp-clf-socp} \\
& \text{s.t.} & & \tilde{\dot{V}}(x,u)\!+\!\gpmuV(x,u)\!+\!\beta \gpsigmaV(x,u)\!+\!\lambda V(x) \leq 0. \nonumber
\end{align}
% \HRule
\hrule
\normalsize
\vspace{2mm}
In this paper, we additionally incorporate a CBF chance constraint using the probabilistic bound on the CBF derivative given by \eqref{eq:B_UCB}. For that, we substitute this bound into the CBF constraint of \eqref{eq:cbfclf-constraint2}, resulting in a chance constraint that, if feasible, guarantees that the true CBF constraint of \eqref{eq:cbfclf-constraint2} is satisfied with a probability of $1-\delta$. This chance constraint is incorporated into the following optimization problem, forming a safety-critical stabilizing feedback controller that takes into account the learned model uncertainty:

\small
% \HRule
\vspace{2mm}
\hrule
\vspace{2mm}
\noindent \textbf{GP-CBF-CLF-SOCP}:
\begin{subequations}
\label{eq:gp-cbf-clf-socp}
\begin{align}
u^{*}(x)& = & &  \hspace{-5pt}\underset{(u, d)\in \R^{m+1}}
{\argmin}\norm{u}_2^2 + p~d^2
 \\
& \text{s.t.} & & \hspace{-5pt}\tilde{\dot{V}}(x,u)\!+\!\gpmuV(x,u)\!+\!\beta \gpsigmaV(x,u)\!+\!\lambda V(x) \leq d, \label{eq:socp-clf-constraint} \\
&&& \hspace{-5pt}\tilde{\dot{B}}(x,u)\!+\!\gpmuB(x,u)\!-\!\beta \gpsigmaB(x,u)\!+\!\gamma (B(x)) \geq 0. \label{eq:socp-cbf-constraint}
\end{align}
\end{subequations}
% \HRule
\hrule
\normalsize
\vspace{2mm}
The above optimization problem is an SOCP if the ADP compound kernel of Definition \ref{def:adpkernel} is used for both $\Delta_V$ and $\Delta_B$ \cite[Thm.~3]{GPCLFSOCP}, thanks to the affine and quadratic structures of the means and variances, respectively. Also, it can be expressed in the standard form of SOCPs by transforming the quadratic objective into a Second-Order Cone (SOC) constraint and a linear objective, since the two chance constraints are SOC constraints.

This program is solved in real time to obtain a safe stabilizing control input at the current state. Note that the CLF constraint is relaxed in order to give preference to safety over stability in case of conflict. Also, the GP-CBF-CLF-SOCP does not need knowledge about the true plant dynamics, since it uses the estimates of $\Delta_V$ and $\Delta_B$ obtained from the GP regression.
\section{Analysis of Pointwise Feasibility}
% of the GP-CBF-CLF-SOCP}
\label{sec:05feasibility}

For the feasibility analysis of the GP-CBF-CLF-SOCP, we only have to consider the single hard constraint of the problem, which is the CBF chance constraint \eqref{eq:socp-cbf-constraint},
\begin{equation*}
    L_{\tilde{f}}B(x) + L_{\tilde{g}}B(x)u + \gpmuB(x,u) - \beta \gpsigmaB(x,u) + \gamma (B(x)) \geq 0. \nonumber
\end{equation*}
% \fc{
\begin{remark}
\fc{By studying the conditions under which the GP-CBF-CLF-SOCP is feasible, we are actually trying to answer the question of how accurately we need to know our system in order to guarantee safety with a certain probability. Note that unlike QPs, SOCPs with even a single hard constraint can be infeasible.
In our case, it becomes infeasible when the prediction uncertainty is so high that it eclipses the discovery of a control input that can guarantee the system's safety.}
\end{remark}
% }
From \eqref{eq:mu_adp} and \eqref{eq:sigma_adp} we can obtain the mean $\mu_B$ and variance $\sigma^2_B$ of the GP prediction of $\Delta_B$: \begin{equation}
\label{eq:gp_mu_var_feas}
    \hspace{-3pt}\gpmuB (x,u)\!=\!b_{B}(x)^T \! \begin{bmatrix}1 \\ u\end{bmatrix}\!,\; \gpsigmaB^2(x,u)\!=\![1 \;\; u^T] \gpGramB(x)\!\begin{bmatrix}1 \\ u\end{bmatrix}.
\end{equation}
Here we use $b_{B}$ and $C_B$ instead of $b_{*}$ and $C_*$ to make it clear that we refer to the prediction of the CBF uncertainty $\Delta_B$.
Note that when $k$ is a valid kernel and $\sigma_n > 0$, the Gram matrix of the variance, $\gpGramB(\cdot)$ is always positive definite. We can therefore write
\vspace{-5pt}
\begin{equation}
\label{eq:cone_feas}
    \gpsigmaB(x, u) = \norm{G(x) \begin{bmatrix}1 \\ u\end{bmatrix}},
\end{equation}
where $G(\cdot)\in\R^{(m+1)\times(m+1)}$ is the matrix square root of $\gpGramB(\cdot)$.
Rewriting \fc{the CBF chance constraint} \eqref{eq:socp-cbf-constraint} in terms of \eqref{eq:gp_mu_var_feas} and \eqref{eq:cone_feas} gives
\begin{equation}
\label{eq:intermediate}
\footnotesize
    \beta \norm{G \begin{bmatrix}1 \\ u\end{bmatrix}} \leq L_{\tilde{f}}B(x) + b_{B}(x)_{1} + \gamma (B(x)) + L_{\tilde{g}}B(x)u + b_{B}(x)_{2:(m+1)}^T u.
\end{equation}
\normalsize
The numerical subscripts are used to indicate elements of vectors or columns of matrices.
\fc{We can now express \eqref{eq:socp-cbf-constraint} in the standard form of SOC constraints:}
\vspace{-2pt}
\begin{equation}
\label{eq:constraint_std}
    \norm{\socA u + \socb} \le \socc u + \socd,
\end{equation}
with $Q\coloneqq \beta G_{2:(m+1)} \in \R^{(m+1)\times m}$, $r \coloneqq \beta G_{1}\in \R^{(m+1)\times1}$,
\vspace{-15pt}
\begin{align}
    & \hspace{-5pt}\socc = \widehat{L_{g}B}(x) := L_{\tilde{g}}B(x) + b_{B}(x)_{2:(m+1)}^T \in \R^{1\times m}, \\
    & \hspace{-5pt}\socd = \widehat{L_{f}B}(x)+ \gamma (B(x)) \in \R,
\end{align}
where $\widehat{L_{f}B}(x):=L_{\tilde{f}}B(x) + b_{B}(x)_{1}$.
$\widehat{L_{g}B}(x)$ is the mean of the prediction of $L_{g}B(x)$, as it incorporates the model-based term $L_{\tilde{g}}B(x)$ and the mean of the GP prediction of the actuation mismatch $b_{B}(x)_{2:(m+1)}^T$. Similarly, $\widehat{L_{f}B}(x)$ is the mean of the prediction of $L_{f}B(x)$. Note that
\begin{equation}
\vspace{-5pt}
\label{eq:eq-mean-prediction}
    [\socd\;\socc] \begin{bmatrix}1 \\ u\end{bmatrix} = \widehat{L_{f}B}(x) + \widehat{L_{g}B}(x) u + \gamma (B(x))
\end{equation}
is the mean prediction of the left-hand side of the true plant CBF constraint \eqref{eq:cbfclf-constraint2}.

We now provide an equivalent formulation of constraint \eqref{eq:constraint_std}, which will be useful for the feasibility analysis.
\begin{lemma}
\label{lemma:fundamental}The SOC constraint \eqref{eq:constraint_std} is feasible at $x \in \R^n$ if and only if there exists a $u\in\R^m$ such that both of the following two conditions hold: \vspace{-.5em}
\begin{subequations}\label{eq:feas_lemma1}
\begin{numcases}{}
    [1\ u^T] H(x) \begin{bmatrix} 1 \\u \end{bmatrix} \leq 0,    \label{eq:feas_lemma1_1}\\
    \socc u + \socd \geq 0,     \label{eq:feas_lemma1_2} \vspace{-.5em}
\end{numcases}
\end{subequations}
where the symmetric matrix $H(x)$ takes the form \begin{equation}
\label{eq:H_def}
\small
    H(x)\!=\!\begin{bmatrix} \socb^T\socb\!-\!\socd^2 \!&\!\socb^T\socA\!-\!\socd\socc \\
    \socA^T\socb\!-\!\socc^T\socd\!&\! \socA^T\socA\!-\!\socc^T\socc \end{bmatrix}.
\end{equation}
\end{lemma}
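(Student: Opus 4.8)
The plan is to turn the second-order cone feasibility question into an equivalent quadratic-inequality question by squaring, which is legitimate precisely when the right-hand side is known to be nonnegative. First I would note that in \eqref{eq:constraint_std}, $\norm{\socA u + \socb}\le \socc u+\socd$, the left-hand side is a norm and hence nonnegative, so any feasible $u$ automatically satisfies $\socc u + \socd\ge 0$, which is exactly \eqref{eq:feas_lemma1_2}. Conversely, under the standing sign condition $\socc u+\socd\ge 0$, the inequality $\norm{\socA u+\socb}\le \socc u+\socd$ holds if and only if $\norm{\socA u+\socb}^2\le(\socc u+\socd)^2$.

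The one computation needed is the algebraic identity
\[
\norm{\socA u+\socb}^2-(\socc u+\socd)^2 \;=\; [1\ u^T]\,H(x)\begin{bmatrix}1\\u\end{bmatrix},
\]
with $H(x)$ given by \eqref{eq:H_def}. I would expand the left-hand side as $u^T(\socA^T\socA-\socc^T\socc)u+2(\socb^T\socA-\socd\socc)u+(\socb^T\socb-\socd^2)$ and compare term by term with the block expansion of $[1\ u^T]H(x)\begin{bmatrix}1\\u\end{bmatrix}$; the only subtlety is that the off-diagonal blocks of $H(x)$ contribute $(\socb^T\socA-\socd\socc)u+u^T(\socA^T\socb-\socc^T\socd)$, and since the first summand is a scalar it equals its own transpose, so the two combine into $2(\socb^T\socA-\socd\socc)u$. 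Given this identity, condition \eqref{eq:feas_lemma1_1} is exactly $\norm{\socA u+\socb}^2\le(\socc u+\socd)^2$.

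The lemma then follows by combining the two directions. For necessity, if $u$ satisfies \eqref{eq:constraint_std} then $\socc u+\socd\ge\norm{\socA u+\socb}\ge 0$ gives \eqref{eq:feas_lemma1_2}, and squaring both (nonnegative) sides gives $\norm{\socA u+\socb}^2\le(\socc u+\socd)^2$, i.e.\ \eqref{eq:feas_lemma1_1}. For sufficiency, if $u$ satisfies both \eqref{eq:feas_lemma1_1} and \eqref{eq:feas_lemma1_2}, then $\norm{\socA u+\socb}^2\le(\socc u+\socd)^2$ combined with $\socc u+\socd\ge 0$ yields $\norm{\socA u+\socb}\le|\socc u+\socd|=\socc u+\socd$, which is \eqref{eq:constraint_std} for that $u$.

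I do not anticipate a genuine obstacle here; the argument is elementary. The only things to get right are the bookkeeping in the quadratic-form identity---keeping $\socc$ as a row vector so that $\socc^T\socc$ is the intended rank-one matrix and the cross terms pair correctly---and the observation that the squaring step is reversible only on the set where $\socc u+\socd\ge 0$, which is exactly why \eqref{eq:feas_lemma1_2} cannot be absorbed into \eqref{eq:feas_lemma1_1} and must be retained as a separate condition.
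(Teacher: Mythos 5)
Your proof is correct and follows the same route as the paper: square the SOC inequality to obtain the quadratic form in $H(x)$, and retain the sign condition $\socc u + \socd \ge 0$ separately since squaring is only reversible on that set. You simply spell out the block-expansion identity that the paper leaves implicit.
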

\vspace{-1em}

\begin{proof}
Inequality \eqref{eq:feas_lemma1_1} is obtained by simply taking squares on both sides of \eqref{eq:constraint_std}, and \eqref{eq:feas_lemma1_2} checks that the value of the right-hand side of \eqref{eq:constraint_std} is non-negative, since the left-hand side is always non-negative.
\end{proof}

\subsection{Necessary Condition}

With Lemma \ref{lemma:fundamental}, we can now formulate a 
necessary condition for pointwise feasibility of the GP-CBF-CLF-SOCP.

\begin{lemma}[Necessary condition for pointwise feasibility]
\label{lemma:necessary}
If the GP-CBF-CLF-SOCP of \eqref{eq:gp-cbf-clf-socp} is feasible at a point $x \in \R^n$, then the following condition must be satisfied:
\begin{equation}
\label{eq:necessary}
     [\socd\;\;\socc]\ \gpGramB(x)^{-1}\begin{bmatrix}\socd \\ \socc^T\end{bmatrix} \geq \beta^2,
\end{equation}
or, equivalently, $H(x)$ cannot be positive definite.
\end{lemma}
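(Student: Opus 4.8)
The plan is to combine Lemma~\ref{lemma:fundamental} with a closed-form expression of $H(x)$ as a rank-one perturbation of $\beta^2\gpGramB(x)$, and then simply read off the definiteness condition. First, I would reduce the statement to a fact about $H(x)$ alone: the GP-CBF-CLF-SOCP has \eqref{eq:socp-cbf-constraint}---equivalently the SOC constraint \eqref{eq:constraint_std}---as its only hard constraint, so if it is feasible at $x$ then Lemma~\ref{lemma:fundamental} yields a $u\in\R^m$ with $[1\ u^T]\,H(x)\,[1\ u^T]^T\le 0$. Since $[1\ u^T]^T\neq 0$, this already rules out $H(x)\succ 0$. Hence feasibility implies that $H(x)$ is not positive definite, and it remains only to show that ``$H(x)$ is not positive definite'' is equivalent to \eqref{eq:necessary}.

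Second, I would establish the identity $H(x)=\beta^2\gpGramB(x)-\psi\psi^T$, where $\psi:=\left[\begin{smallmatrix}\socd\\\socc^T\end{smallmatrix}\right]\in\R^{m+1}$. Evaluating the quadratic form of \eqref{eq:H_def} on an arbitrary $y=[y_0\ \bar u^T]^T$, a direct expansion gives $y^T H(x)\,y=\norm{y_0\,\socb+\socA\,\bar u}^2-(y_0\,\socd+\socc\,\bar u)^2$. Now $[\,\socb\ \socA\,]=\beta\,G(x)$ by the definitions of $r$ and $Q$, and $G(x)^T G(x)=\gpGramB(x)$ by \eqref{eq:cone_feas}, so the first term equals $\beta^2 y^T\gpGramB(x)\,y$, while the second term is $(\psi^T y)^2=y^T\psi\psi^T y$. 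Since this holds for every $y$ and both matrices are symmetric, the claimed identity follows.

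Third, I would finish with an elementary rank-one argument. Because $\sigma_n>0$, the matrix $\gpGramB(x)$ is positive definite, so $G(x)$ is invertible and $H(x)\succ 0$ if and only if the congruent matrix $G(x)^{-T}H(x)G(x)^{-1}=\beta^2 I-\zeta\zeta^T$ is positive definite, where $\zeta:=G(x)^{-T}\psi$ and $\norm{\zeta}^2=\psi^T\gpGramB(x)^{-1}\psi$. The eigenvalues of $\beta^2 I-\zeta\zeta^T$ are $\beta^2$ (with multiplicity $m$) and $\beta^2-\norm{\zeta}^2$, so, using $\beta>0$, this matrix is positive definite exactly when $\norm{\zeta}^2<\beta^2$, i.e.\ when $\psi^T\gpGramB(x)^{-1}\psi<\beta^2$. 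Negating, $H(x)$ fails to be positive definite if and only if $\psi^T\gpGramB(x)^{-1}\psi\ge\beta^2$, which is precisely \eqref{eq:necessary}; combined with the first step this proves the lemma. (Equivalently, one can obtain the same characterization from a Schur complement applied to the bordered matrix $\left[\begin{smallmatrix}\beta^2\gpGramB(x)&\psi\\\psi^T&1\end{smallmatrix}\right]$.)

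I expect the only mildly delicate point to be the bookkeeping in the quadratic-form expansion, and keeping straight the orientation of the square root, i.e.\ that $[\,\socb\ \socA\,]=\beta\,G$ with $G^T G=\gpGramB$ (not $GG^T$); the definiteness step itself is routine. It is also worth flagging explicitly where positive definiteness of $\gpGramB(x)$ and $\beta>0$ enter---the former to justify the congruence transformation and the existence of $\gpGramB(x)^{-1}$, the latter for the eigenvalue comparison---since these are exactly the hypotheses that make \eqref{eq:necessary} a meaningful (non-vacuous) condition.
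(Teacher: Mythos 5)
Your proof is correct, and its overall architecture matches the paper's: both first use Lemma~\ref{lemma:fundamental} to conclude that feasibility forces $[1\ u^T]H(x)[1\ u^T]^T\le 0$ for some $u$, hence $H(x)$ cannot be positive definite, and both then reduce the remaining equivalence to the identity $H(x)=\beta^2\gpGramB(x)-\psi\psi^T$ with $\psi=[\socd\ \socc^T]^T$. Where you genuinely differ is in the last step. The paper embeds the problem in the bordered matrix $M=\bigl[\begin{smallmatrix}1 & \psi^T\\ \psi & \beta^2\gpGramB(x)\end{smallmatrix}\bigr]$ and applies the Schur-complement characterization of positive definiteness twice (once against the block $\beta^2\gpGramB(x)$, once against the block $1$) to pass between the scalar inequality $\psi^T\gpGramB(x)^{-1}\psi<\beta^2$ and $H(x)\succ 0$; you instead perform a congruence by $G(x)^{-1}$ and diagonalize the resulting rank-one perturbation $\beta^2 I-\zeta\zeta^T$ directly, reading off the eigenvalues $\beta^2$ and $\beta^2-\psi^T\gpGramB(x)^{-1}\psi$. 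The two routes are logically equivalent (as you note, your bordered-matrix remark \emph{is} the paper's proof), but yours is more self-contained, avoiding the citation to the Schur-complement theorem, and it has the added merit of actually verifying the identity $H(x)=\beta^2\gpGramB(x)-\psi\psi^T$ by expanding the quadratic form, which the paper only asserts. Your explicit flagging of where $\gpGramB(x)\succ 0$ and $\beta>0$ enter is also appropriate and consistent with the paper's hypotheses ($\sigma_n>0$ and the definition of $\beta$ in Theorem~\ref{theorem:DeltaUCB}).
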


\begin{proof}
\fc{See Appendix.}
\end{proof}

The left-hand side of \eqref{eq:necessary} encodes a trade-off between the vector $[\socd\ \socc]$---information about the mean prediction of the CBF constraint as \eqref{eq:eq-mean-prediction} indicates---and the uncertainty matrix $\gpGramB(x)$.

\fc{To provide some insight, note that the left-hand side of \eqref{eq:necessary} contains $f$-related components ($\socd=\widehat{L_{f}B}(x)+ \gamma (B(x))$ and the upper left block of $\gpGramB(x)$) and $g$-related components ($\socc = \widehat{L_{g}B}(x)$ and the lower right block of $\gpGramB(x)$).
The $f$-related components encode a trade-off between the mean predicted value of the CBF constraint without actuation ($\socd$) and the unactuated prediction uncertainty (the upper left block of $\gpGramB(x)$).
On the other hand, $\socc$ reflects how $u$ can influence the change of $B(x)$; speaking informally, the dynamics of $B(x)$ would be ``controllable" when $\socc$ is a non-zero vector. In this case, the $g$-related components of \eqref{eq:necessary} reveal that the controllability of $B(x)$ ($\socc$) ``defeating'' the growth of actuation uncertainty with respect to $u$ (the lower right block of $\gpGramB(x)$) helps secure the necessary condition for feasibility. As a matter of fact, this actuation-dependent portion of the trade-off can be used by itself to verify a sufficient condition for feasibility, which will be explained next.}

\subsection{Sufficient Condition}

We now present a sufficient condition for feasibility of the GP-CBF-CLF-SOCP. \fc{As introduced before, this condition is based on the existence of an input direction along which the controllability of $B(x)$ ``defeats'' the growth of the prediction uncertainty with respect to $u$, as will be seen in the proof of Lemma \ref{lemma:sufficient}.}

\begin{lemma}[Sufficient condition for pointwise feasibility]
\label{lemma:sufficient}
For a point $x \in \R^n$, let $\eigval$ be the minimum eigenvalue of the symmetric matrix $\socA^T \socA - \socc^T \socc$.
If $\eigval<0$, the GP-CBF-CLF-SOCP \eqref{eq:gp-cbf-clf-socp} is feasible at $x$.
\end{lemma}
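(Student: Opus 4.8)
The plan is to reduce feasibility of the GP-CBF-CLF-SOCP \eqref{eq:gp-cbf-clf-socp} to feasibility of its single hard constraint: the relaxed CLF chance constraint \eqref{eq:socp-clf-constraint} is always satisfiable by taking the slack variable $d$ large enough, so only the CBF chance constraint \eqref{eq:socp-cbf-constraint}, equivalently the SOC constraint \eqref{eq:constraint_std}, matters. By Lemma~\ref{lemma:fundamental} it then suffices to exhibit a single $u\in\R^m$ satisfying both \eqref{eq:feas_lemma1_1} and \eqref{eq:feas_lemma1_2}. The key structural observation is that $\socA^T\socA-\socc^T\socc$ is exactly the lower-right $m\times m$ block of the matrix $H(x)$ in \eqref{eq:H_def}; hence $\eigval<0$ means the quadratic form $u\mapsto[1\ u^T]H(x)[1\ u^T]^T$ is unbounded below along the eigendirection of $\socA^T\socA-\socc^T\socc$ associated with $\eigval$. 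I would look for a feasible point precisely along that ray --- this is the ``input direction along which controllability defeats the growth of the uncertainty'', since along it $\norm{\socA\eigvec}^2<(\socc\,\eigvec)^2$.

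Concretely, let $\eigvec\in\R^m$ be a unit eigenvector with $(\socA^T\socA-\socc^T\socc)\eigvec=\eigval\,\eigvec$, and try $u=t\,\eigvec$ for $t>0$. First, restricting \eqref{eq:feas_lemma1_1} to this ray gives a scalar quadratic in $t$ whose leading coefficient is $\eigvec^T(\socA^T\socA-\socc^T\socc)\eigvec=\eigval<0$; being a downward-opening parabola, it is $\le 0$ for all sufficiently large $t$, so \eqref{eq:feas_lemma1_1} holds eventually. Second, for \eqref{eq:feas_lemma1_2} note $\socc u+\socd=t(\socc\,\eigvec)+\socd$, so it is enough to show $\socc\,\eigvec\neq 0$ and to orient $\eigvec$ so that $\socc\,\eigvec>0$ (replacing $\eigvec$ by $-\eigvec$ does not change $\eigval$). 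This is where $\eigval<0$ enters essentially: pairing the eigenvector equation with $\eigvec$ yields $\norm{\socA\eigvec}^2-(\socc\,\eigvec)^2=\eigval<0$, hence $(\socc\,\eigvec)^2>\norm{\socA\eigvec}^2\ge 0$ and in particular $\socc\,\eigvec\neq 0$. With $\socc\,\eigvec>0$ we get $\socc u+\socd\to+\infty$ as $t\to+\infty$, so \eqref{eq:feas_lemma1_2} also holds for all large $t$.

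Choosing any $t$ large enough to satisfy both inequalities simultaneously, $u=t\,\eigvec$ is, by Lemma~\ref{lemma:fundamental}, a feasible point for \eqref{eq:constraint_std}, and therefore the GP-CBF-CLF-SOCP is feasible at $x$. I expect the only real obstacle to be the requirement that the two conditions of Lemma~\ref{lemma:fundamental} be met by the \emph{same} $u$: driving the quadratic term of \eqref{eq:feas_lemma1_1} to $-\infty$ is automatic once $\eigval<0$, but one must ensure the sign constraint \eqref{eq:feas_lemma1_2} is not sacrificed along that ray. The clean resolution is the implication $\eigval<0\Rightarrow\socc\,\eigvec\neq 0$ derived above, which rules out the degenerate case $\socc\,\eigvec=0$ entirely and lets us pick the ray's orientation freely; everything else is a routine limiting argument in the scalar parameter $t$.
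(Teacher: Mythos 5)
Your proposal is correct and takes essentially the same route as the paper's proof: restrict to the ray spanned by the unit eigenvector $\eigvec$ of $F=\socA^T \socA - \socc^T \socc$ for $\eigval$, use $\eigvec^T F \eigvec = \eigval < 0$ to conclude $\socc\,\eigvec \neq 0$, orient the ray so that $\socc\,\eigvec>0$, and scale far enough out that both conditions of Lemma~\ref{lemma:fundamental} hold simultaneously. The only cosmetic difference is that the paper encodes the orientation via a $\text{sgn}(\socc\,\eigvec)$ factor in the control input rather than by flipping $\eigvec$.
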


\begin{proof}
For a specific $x \in \R^n$, define \vspace{-.5em}
\begin{equation}
\label{eq:defineF}
    F \coloneqq \socA^T \socA - \socc^T \socc.
\vspace{-.5em}
\end{equation}
Let $\eigvec$ be the unit eigenvector of $F$ associated with the eigenvalue $\eigval$. Then, $\eigval\!<\!0\!\implies\!\eigvec^T F \eigvec\!<\!0$. Since $\socA^T \socA$ is positive definite, this indicates that $\socc \eigvec \neq 0$. Let's take a control input in the direction of this eigenvector:\vspace{-.5em}
\begin{equation}
\label{eq:feasibledirection}
    u = \alpha \cdot \text{sgn}(\socc \eigvec) \cdot \eigvec, \quad \alpha > 0.
    \vspace{-.5em}
\end{equation}
Using this control input, the left-hand side of Equation \eqref{eq:feas_lemma1_1} becomes $(\socb^T\socb - \socd^2) + 2 \alpha\ \! \text{sgn}(\socc \eigvec)(\socb^T\socA - \socd\socc) \eigvec + \alpha^2 \eigvec^T F \eigvec  $, which can be made negative by choosing a large enough constant $\alpha$, since $\eigvec^T F \eigvec < 0$. Finally, plugging this control input into Equation \eqref{eq:feas_lemma1_2} yields $|\socc \eigvec|\  \alpha + \socd \geq 0$, which again holds for a sufficiently large $\alpha$. Therefore, \eqref{eq:gp-cbf-clf-socp} is feasible by Lemma \ref{lemma:fundamental}.
\end{proof}

The crux of this sufficient condition is that a single scalar value ($\lambda_\dagger$) being negative guarantees the feasibility of the problem, and this condition can be easily checked online before solving \eqref{eq:gp-cbf-clf-socp}. 
\fc{Note that the value of $\eigval$ for each state is dependent on the choice of training data. Therefore, we foresee many useful applications of the condition $\eigval<0$ to verify the data in use. For instance, this could serve to implement online adaptive data-selection algorithms. Moreover, the proof of Lemma \ref{lemma:sufficient} is constructive. In fact, \eqref{eq:feasibledirection} constitutes the control input direction that best balances safety and uncertainty. }

\subsection{Necessary and Sufficient Condition}

Finally, we state the necessary and sufficient condition for pointwise feasibility. This condition includes those of Lemmas \ref{lemma:necessary} and \ref{lemma:sufficient} with some additional cases.

\begin{figure}%[!htb]
\centering
\includegraphics[width=0.48\textwidth]{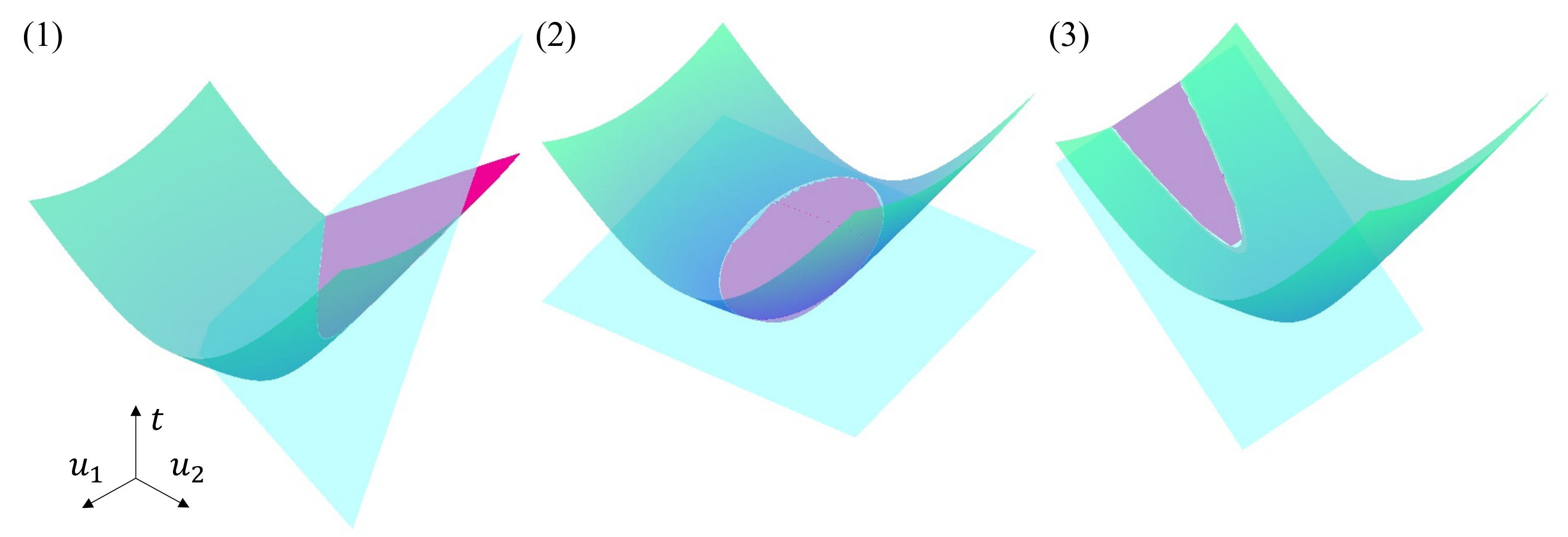}
\vspace{-7mm}
\caption{Visualization of the feasibility conditions of Theorem \ref{th:nec_and_suf}. The green surface is the hyperboloid $\norm{\socA u + \socb}\! =\! t$, the blue hyperplane is $\socc u\! +\! \socd\! =\! t$, and the pink region indicates the feasible set. Case 1) hyperbolic intersection, Case 2) elliptical intersection, and Case 3) parabolic intersection. Note that for Cases 2) and 3), if \eqref{eq:case2-add} and \eqref{eq:case3-add} are not satisfied, respectively, the feasible set is empty.}
\label{fig:feasibility}
\vspace{-10pt}
\end{figure}

\begin{theorem}[Necessary and sufficient condition for pointwise feasibility]
\label{th:nec_and_suf}
For a given state $x \in \R^n$, let $\eigval$ be the minimum eigenvalue of the symmetric matrix $F$ defined in \eqref{eq:defineF}. The GP-CBF-CLF-SOCP \eqref{eq:gp-cbf-clf-socp} is feasible at $x$ if and only if \eqref{eq:necessary} is satisfied and one of the following cases holds:
\begin{enumerate}
    \item $\eigval <0$;
    \item $\eigval>0$,
    and \vspace{-.5em}
    \begin{equation}
    \label{eq:case2-add}
        \socd-\socc F^{-1}\{\socA^T\socb-\socc^T\socd\}\!\geq\!0; \vspace{-.5em}
    \end{equation}
    \item $\eigval =0$,
    and \vspace{-.5em}
    \begin{equation}
    \label{eq:case3-add}
        \socd - \socc\left(\socA^T\socA \right)^{-1}\socA^T \socb^T > 0. \vspace{-.5em}
    \end{equation}
\end{enumerate}
\end{theorem}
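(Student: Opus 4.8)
The plan is to reduce feasibility, via Lemma~\ref{lemma:fundamental}, to the existence of a $u$ with $q(u)\le0$ and $\socc u+\socd\ge0$, where $q(u):=\norm{\socA u+\socb}^2-(\socc u+\socd)^2$ is the left-hand side of \eqref{eq:feas_lemma1_1} (the squaring step in the proof of Lemma~\ref{lemma:fundamental}), and then to classify the geometry of the sublevel set $\{u:q(u)\le0\}$ according to the sign of $\eigval$; these three signs give exactly the hyperbolic, elliptic, and parabolic intersections of Fig.~\ref{fig:feasibility}. Two observations organize everything. First, using $\socA=\beta G_{2:(m+1)}$, $\socb=\beta G_1$ and $\gpGramB(x)=G(x)^TG(x)$, the matrix in \eqref{eq:H_def} reads $H(x)=\beta^2\gpGramB(x)-[\socd\ \socc]^T[\socd\ \socc]$, a rank-one downdate of a positive definite matrix, so by the matrix-determinant-lemma / Schur-complement criterion, condition \eqref{eq:necessary} is exactly the statement ``$H(x)$ is not positive definite'', as in Lemma~\ref{lemma:necessary}. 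Second, $\socA^T\socA\succ0$ (since $G(x)$ has full rank, $\gpGramB(x)$ being positive definite), so precisely one of $\eigval<0$, $\eigval=0$, $\eigval>0$ holds; and whenever $\eigval\le0$ the lower-right block $F$ of $H(x)$ is not positive definite, hence neither is $H(x)$ --- that is, \eqref{eq:necessary} is automatic unless $\eigval>0$, so it only carries extra content in the elliptic case.

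For $\eigval<0$, Lemma~\ref{lemma:sufficient} already exhibits a feasible input, so there is nothing further to do. For $\eigval>0$, $F\succ0$, so $q$ is a strictly convex quadratic with unique minimizer $u^\bullet=-F^{-1}(\socA^T\socb-\socc^T\socd)$ and minimum value equal to the Schur complement of $F$ in $H(x)$; by the first observation, \eqref{eq:necessary} holds iff that minimum is $\le0$, i.e. iff $E:=\{u:q(u)\le0\}$ is nonempty (a bounded ellipsoid, possibly the single point $u^\bullet$). The key step is that $E$ cannot properly straddle the hyperplane $\{\socc u+\socd=0\}$: if some $u\in E$ had $\socc u+\socd=0$, then $\norm{\socA u+\socb}^2=q(u)\le0$ would force $\socA u+\socb=0$, and substituting $\socb=-\socA u$, $\socd=-\socc u$ shows $q(\cdot)=(\cdot-u)^TF(\cdot-u)$, which is minimized only at $\cdot=u$, so $u=u^\bullet$. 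Thus $\socc u+\socd$ vanishes on the connected set $E$ only possibly at $u^\bullet$, so its sign over $E$ is dictated by its value there, and $E\cap\{\socc u+\socd\ge0\}\neq\emptyset$ iff $\socc u^\bullet+\socd\ge0$, which is precisely \eqref{eq:case2-add}; this gives both directions.

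For $\eigval=0$, the rank-one downdate moves only one eigenvalue of $\socA^T\socA$ below $\lambda_{\min}(\socA^T\socA)$, so $0$ is a simple eigenvalue of $F$; let $\eigvec$ span $\ker F$, so $\norm{\socA\eigvec}=|\socc\eigvec|$ and $\socc\eigvec\neq0$ (as $\socA\eigvec\neq0$). Along the line $u+t\eigvec$ the function $q$ is affine, while it is genuinely convex in the transverse directions, so $\{u:q(u)\le0\}$, when nonempty, is an unbounded convex region whose recession set is a ray (or line) along $\pm\eigvec$. Tracking $\socc u+\socd$ along those recession directions and at the transverse ``apex'' $u^\sharp:=-(\socA^T\socA)^{-1}\socA^T\socb=\argmin_u\norm{\socA u+\socb}$, one finds that the SOC constraint is feasible iff $\socc u^\sharp+\socd>0$, i.e. \eqref{eq:case3-add}; the supremum of $\socc u+\socd$ over $\{q\le0\}$ is approached only in the limit along the recession direction, unless $u^\sharp$ itself is feasible, which is why the inequality is strict. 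Assembling the three cases, with the first observation supplying \eqref{eq:necessary} automatically outside the elliptic case, yields the claimed equivalence.

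I expect the parabolic case to be the main obstacle: unlike the elliptic case there is no interior point to test against the hyperplane, the relevant extremum of $\socc u+\socd$ over $\{q\le0\}$ is attained only asymptotically, and one must carefully reconcile the signs of $\socc\eigvec$, of the linear part of $q$ along $\eigvec$, and of $q(u^\sharp)$ (using that $\socb\notin\operatorname{range}\socA$, which follows from $\gpGramB(x)\succ0$) to land on the strict inequality in \eqref{eq:case3-add}. A secondary technical point is the handling of the degenerate sub-cases of the elliptic argument --- when $E$ collapses to the single point $u^\bullet$, or when $\{\socc u+\socd=0\}$ is tangent to $E$ --- which rely on the observation above that any such contact point must coincide with $u^\bullet$.
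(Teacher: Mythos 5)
Your proposal takes essentially the same route as the paper's proof: reduce feasibility to Lemma~\ref{lemma:fundamental}, trichotomize on the sign of $\eigval$, reuse Lemma~\ref{lemma:sufficient} for the hyperbolic case, test the unique minimizer $u_1=-F^{-1}(\socA^T\socb-\socc^T\socd)$ of the quadratic in the elliptic case, and move from the least-squares point $u_0=-(\socA^T\socA)^{-1}\socA^T\socb$ along the null eigenvector $\eigvec$ of $F$ in the parabolic case; the test points and the resulting conditions \eqref{eq:case2-add} and \eqref{eq:case3-add} coincide with the paper's. Two differences deserve comment. First, your elliptic ``only if'' direction is actually sharper than the paper's: where the paper asserts geometrically that $\socc u_1+\socd<0$ forces the hyperplane to meet only the negative sheet, you prove that $\socc u+\socd$ cannot vanish on $E=\{u:\,\eqref{eq:feas_lemma1_1}\text{ holds}\}$ unless $E$ degenerates to the single point $u_1$, so its sign on the connected set $E$ is read off at $u_1$; this cleanly disposes of the tangency and single-point sub-cases, and your observation that \eqref{eq:necessary} is automatic whenever $\eigval\le 0$ correctly isolates where that hypothesis carries content. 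Second, your parabolic converse is only sketched (``one finds that\dots''), though you name the decisive ingredient $\socb\notin\operatorname{range}\socA$. To close it: writing $P$ for the orthogonal projector onto $\operatorname{range}\socA$, one has $\norm{\socA u+\socb}^2=\norm{\socA(u-u_0)}^2+\norm{(I-P)\socb}^2$, while $F\succeq 0$ gives $|\socc(u-u_0)|\le\norm{\socA(u-u_0)}$, so $\socc u+\socd = p+\socc(u-u_0)\le p+\norm{\socA(u-u_0)}<p+\norm{\socA u+\socb}$ because $(I-P)\socb\neq 0$; hence $p\le 0$ implies $\socc u+\socd<\norm{\socA u+\socb}$ for every $u$, i.e.\ infeasibility, which is exactly why the inequality in \eqref{eq:case3-add} is strict. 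Since the paper itself treats this direction with an informal geometric statement, your proposal is at worst on par with it there and otherwise correct.
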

 Case 1) is exactly Lemma \ref{lemma:sufficient}, and it corresponds to the feasible set being hyperbolic. Cases 2) and 3) correspond to elliptic and parabolic feasible sets, respectively (Fig. \ref{fig:feasibility}).
\begin{proof} See Appendix. \end{proof}

We believe that Theorem \ref{th:nec_and_suf} constitutes the first step towards understanding what conditions the distribution of data should satisfy in order to obtain probabilistic safety guarantees in the presence of actuation uncertainty. This problem is related to the notion of persistency of excitation \cite{verhaegen2007filtering}. However, we think that Theorem \ref{th:nec_and_suf} hints at less restrictive conditions that are sufficient to maintain safety. Verifying specific conditions on the data distribution remains as our future work. Finally, note that the presented analysis is also directly applicable to the GP-CLF-SOCP of \eqref{eq:gp-clf-socp} by considering the CLF chance constraint instead of the CBF one.
\section{Numerical Results}
\label{sec:06results}
To validate our proposed controller we utilize the following model of an adaptive cruise control system: \vspace{-1em}

\small

\begin{equation}
    \dot{x} = f(x) + g(x)u, ~~f(x)\! =\! \begin{bmatrix} - F_r(v)/m \\ v_0 - v \end{bmatrix},\
    g(x)\! =\! \begin{bmatrix} 0 \\ 1/m \end{bmatrix},
\end{equation}

\normalsize
\noindent where $x\!=\![v\ z]^T\in \R^2$ is the system state, with $v$ being the forward velocity of the ego car, and $z$ the distance between the ego car and the front car. $u \in \R$ is the ego car's wheel force as control input, $v_0$ is the velocity of the front car which is assumed to be constant (14 m/s), $m$ is the mass of the ego car, and $F_r(v)\!=\!f_0\!+\!f_1 v\!+\!f_2 v^2$ is the rolling resistance acting on the ego car.

The control objective is to reach a desired speed command of $v_d = 24$ m/s while maintaining a safe distance $z \geq T_h v$ with respect to the front vehicle, where $T_h\!=\!1.8$ s is the lookahead time. For this, we design a CLF as $V(x) = (v-v_d)^2$ and a CBF as $B(x)\!=\!z - T_h v$.

In order to introduce model uncertainty,
for the nominal model we use $m\!=\!1650\ \text{kg},\ f_0\!=\!0.1, \ f_1\!=\!5, \ f_2\!=\!0.25;$ and for the true plant $m\!=\!3300\ \text{kg},\ f_0\!=\!0.2, \ f_1\!=\!10, \ f_2\!=\!0.5$.

\begin{figure}
\centering
\includegraphics[width=\columnwidth]{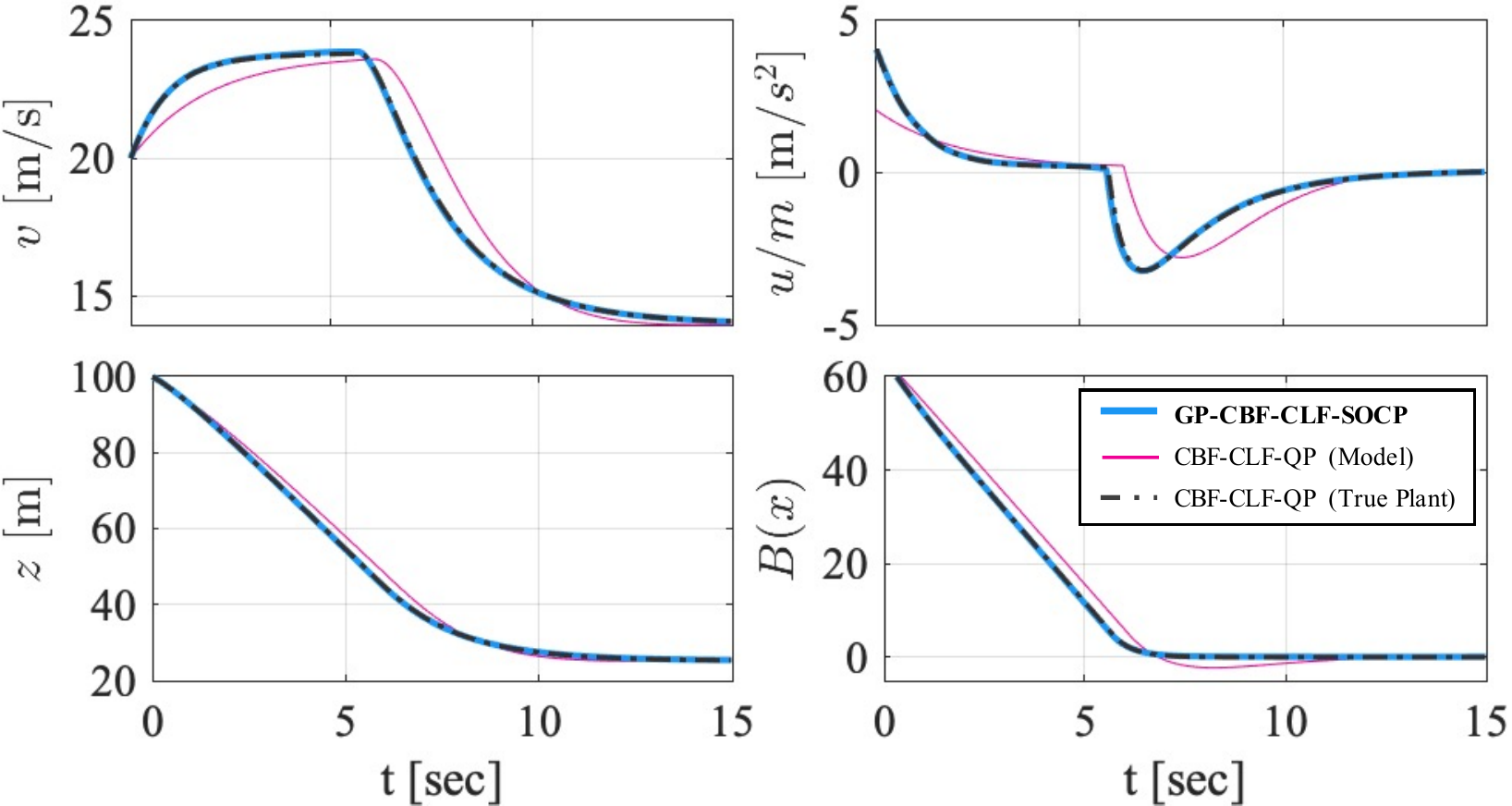}
\vspace{-7mm}
\caption{Simulation results of applying the GP-CBF-CLF-SOCP (blue) to the adaptive cruise control example under model uncertainty, compared to the nominal-model-based CBF-CLF-QP (pink) and the oracle true-plant-based CBF-CLF-QP (grey dashed). Note that the simulation ends with $v$ decreased since it is approaching the front vehicle, whose speed (14 m/s) is lower than the desired speed (24 m/s).
\fc{The proposed GP-CBF-CLF-SOCP closely matches the performance of the oracle CBF-CLF-QP and always respects safety $(B(x)\! \geq\! 0)$, whereas the nominal-model-based CBF-CLF-QP violates the safety constraint.}
}
\label{fig:results}
\vspace{-.8em}
\end{figure}

\begin{figure}%[!htb]
\centering
\includegraphics[width=\columnwidth]{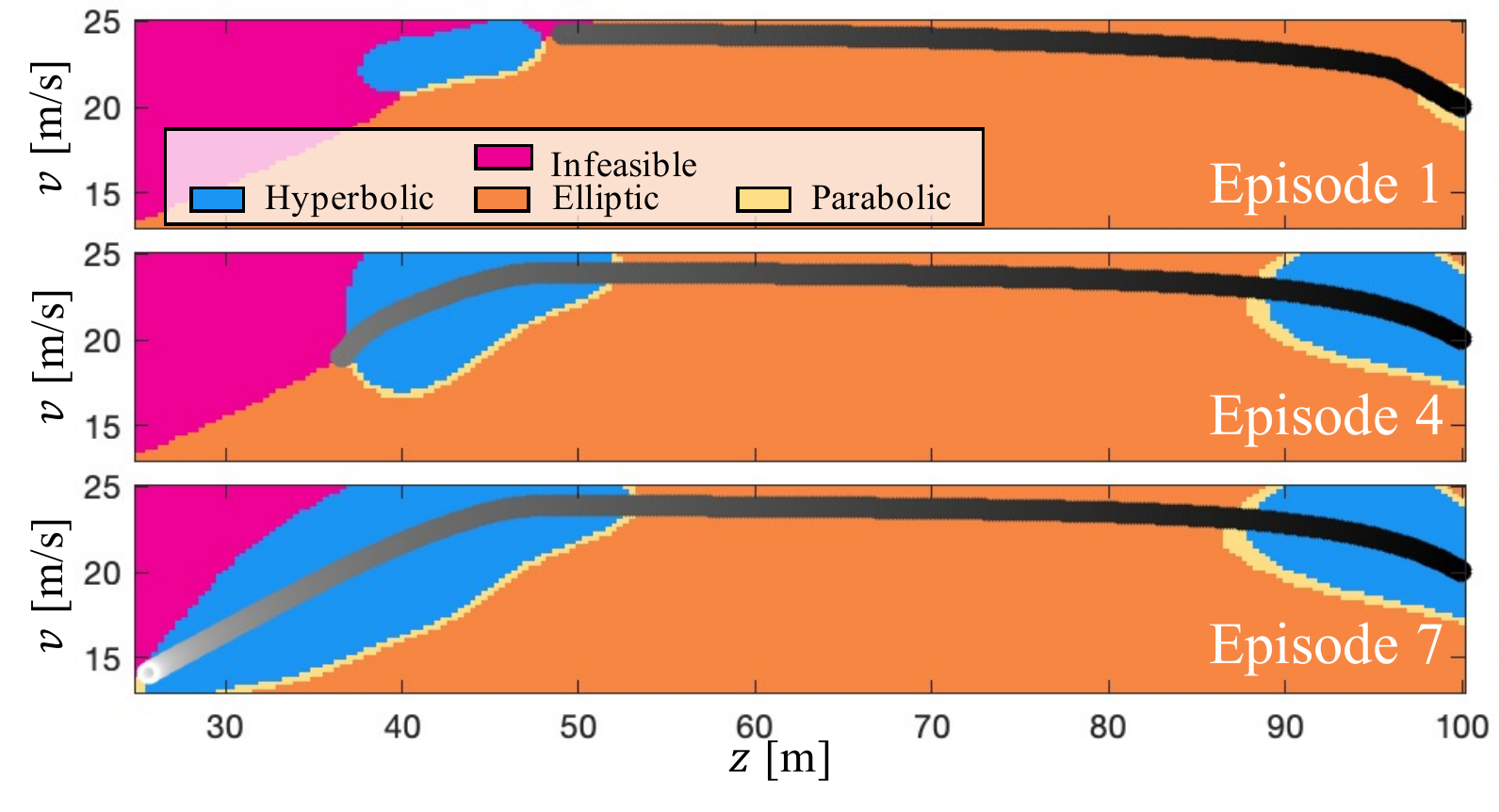}
\vspace{-9mm}
\caption{Evolution of the feasible region during the learning episodes for the adaptive cruise control example. The trajectories evolve from black to white. As more data is collected, the feasible set gets larger. We ran a total of 7 episodes, and the final number of data points is $N=189$.}
\label{fig:episodes}
\vspace{-1em}
\end{figure}

As it can be seen in the lower right plot of Fig. \ref{fig:results}, a CBF-CLF-QP controller constructed based on the nominal model (depicted in pink) violates the safety constraint after about 7 seconds of simulation (since $B(x) < 0$).
This is caused by the model uncertainty.
In order to correct this, we use the GP regression approach that has been presented in this paper for both CLF and CBF constraints.

The data collection process is conducted in an episodic fashion. We start by running an initial rollout of the nominal CBF-CLF-QP until the system violates the safety constraint. With that data, we obtain a first GP model that is now used to run a rollout with the GP-CBF-CLF-SOCP. We stop the rollout if the optimization problem becomes infeasible or if the system becomes unsafe. Because of the high probability bound ($1-\delta$ = 0.95) we use for \eqref{eq:V_UCB} and \eqref{eq:B_UCB}, infeasibility always occurs before the system becomes unsafe.
We then add the collected data to the GP model and run the next rollout. This process is repeated until the rollout is completely feasible. Fig. \ref{fig:episodes} represents this process, and it also depicts how the feasibility region evolves as more data is collected.

In Fig. \ref{fig:results} the performance of the GP-CBF-CLF-SOCP controller of the final episode is shown in blue. For reference, the results of the simulation with the oracle CBF-CLF-QP controller that uses the true plant dynamics are also illustrated in the plots. The proposed GP-CBF-CLF-SOCP recovers the performance of the oracle CBF-CLF-QP and always respects safety, whereas the nominal model-based CBF-CLF-QP violates the safety constraint due to model uncertainty. Finally, the computation time of the SOCP per timestep was $7.4\pm4.2$ ms on a laptop with an
Intel Core i7 and 32 GB of RAM, fast enough for real-time applications.
\section{Conclusion}
\label{sec:07conclusion}
We presented a framework to construct a GP-based safe stabilizing controller (GP-CBF-CLF-SOCP) that takes into account the problem of model uncertainty in the CLF and CBF constraints.
We also presented an analysis of pointwise feasibility of the proposed
GP-CBF-CLF-SOCP, which we believe should pave the way to obtain 
safety guarantees for systems with uncertain dynamics and input effects without requiring the infeasible collection of data that fully characterize the dynamics of the system. This will be the focus of our future work.

\vspace{-1mm}
\section*{Acknowledgements}
We would like to thank Andrew Taylor, Victor Dorobantu, Ivan Jimenez Rodriguez, and Hotae Lee for the insightful discussions.

\appendix
\noindent \textbf{Proof of Lemma \ref{lemma:necessary}}\\
Positive definiteness of $H(x)$ indicates that there does not exist any $u \in \R^m$ such that $[1\ u^T] H(x) \begin{bmatrix} 1\\ u \end{bmatrix} \leq 0$,
and this is a contradiction to Lemma \ref{lemma:fundamental}. Therefore, $H(x)$ cannot be positive definite if the GP-CBF-CLF-SCOP is feasible.

The rest of the proof shows that condition \eqref{eq:necessary} is equivalent to $H(x)$ not being positive definite. Let $\psi(x) \coloneqq [\socd\;\socc]$. Condition \eqref{eq:necessary} does not hold if and only if
\vspace{-.5em}
\begin{equation}
\label{eq:necessary_converse}
\vspace{-.5em}
1- \psi(x) \frac{1}{\beta^2}\gpGramB(x)^{-1}\psi(x)^T = M/(\beta^2 \gpGramB(x)) > 0,
\end{equation}
where $M = \begin{bmatrix} 1 & \psi(x)\\ \psi(x)^T &  \beta^2 \gpGramB(x)\end{bmatrix}$. We use the operator $/$ for the Schur complement.
From \cite[Thm. 1.12]{zhang2006schur}, since $\gpGramB(x)$ is positive definite, \eqref{eq:necessary_converse} holds if and only if $M$ is positive definite.
Applying again the same theorem, but this time to $M/1$, \eqref{eq:necessary_converse} is equivalent to $M/1 = \beta^2 \gpGramB(x) - \psi(x)^T\psi(x)$\;$=[\socb \ \socA]^T [\socb \ \socA]-\psi(x)^T\psi(x)\!=\!H(x)$ being positive definite. Therefore, \eqref{eq:necessary} holds if and only if $H(x)$ is not positive definite.
\\

\noindent \textbf{Proof of Theorem \ref{th:nec_and_suf}}\\
We start the proof by providing a geometric interpretation of the SOC constraint. For a fixed $x \in \R^n$,  $\norm{\socA u + \socb}\! =\! t$ is the positive sheet of an $m$-dimensional hyperboloid in $\R^{m+1}$ (illustrated in Fig. \ref{fig:feasibility}).
This surface 
asymptotically converges to the conical surface $\norm{\socA(u-u_0)}\! =\! t$ as $\norm{u} \xrightarrow{} \infty$, where \vspace{-.5em}
\begin{equation}
\label{eq:least-square-u}
    u_0 = - \left(\socA^T \socA \right)^{-1} \socA^T \socb \vspace{-.3em}
\end{equation}
is the $u$ that minimizes $\norm{\socA u + \socb}$, given by the least-squares solution. We will refer to the conical surface $\norm{\socA(u-u_0)}\! =\! t$ as the \emph{asymptote} of
$\norm{\socA u + \socb}\! =\! t$.

Since $Q(x) \succ 0$, the problem \eqref{eq:gp-cbf-clf-socp} is feasible if and only if an intersection between the hyperboloid $\norm{\socA u\! +\! \socb}\! =\! t$ and the hyperplane $\socc u\! +\! \socd\! =\! t$ exists.

Case 1) Lemma \ref{lemma:sufficient} is the proof. Note that this condition itself implies that \eqref{eq:necessary} is satisfied. The slope of the hyperplane $\socc u + \socd = t$ is greater than the slope of the asymptote of the hyperboloid for some direction of $u$ (Fig. \ref{fig:feasibility}.(1)).

Case 2) Since the smallest eigenvalue of $F$ is positive, this means that $F\succ 0$. Note that the lower right block of the matrix $H(x)$ in \eqref{eq:H_def} is $F$. Therefore, $F\succ 0$ implies that the left-hand side of Equation \eqref{eq:feas_lemma1_1} is strictly convex, attaining the global minimum at some $u = u_1 \in \R^m$. The first order optimality condition gives \vspace{-.5em}
\begin{equation*}
     u_1 = -F^{-1}h, \vspace{-.5em}
\end{equation*}
with $h := \socA^T\socb - \socc^T\socd$. Since at $u_1$ the minimum is achieved, Equation \eqref{eq:feas_lemma1_1} is satisfied if and only if \vspace{-.3em}
\begin{equation}
\label{eq:condition_u1}
    [1\ u_1^T] H(x) \begin{bmatrix} 1 \\u_1 \end{bmatrix} \leq 0. \vspace{-.3em}
\end{equation}
Plugging \eqref{eq:H_def} and $u_1 = -F^{-1}h$ into \eqref{eq:condition_u1}, we get \vspace{-.3em}
\begin{equation}
\label{eq:condition_h}
    \left(\socb^T\socb - \socd^2 \right) - h^T F^{-1} h = H(x) / F \leq 0.\vspace{-.5em}
\end{equation}

\noindent Since $H(x)$ cannot be positive definite by the necessary condition \eqref{eq:necessary}, and $F$ is positive definite, from \cite[Thm. 1.12]{zhang2006schur} \eqref{eq:condition_h} must be satisfied.
Consequently, \eqref{eq:feas_lemma1_1} holds for $u=u_1$.
Now, plugging $u_1$ into $\eqref{eq:feas_lemma1_2}$ we have: $\socc u_1 + \socd = \socd - \socc F^{-1}\{\socA^T\socb - \socc^T\socd\}$, which is precisely the left-hand side of \eqref{eq:case2-add}. The feasible region is non-empty if and only if the above expression is greater than or equal to 0 due to Lemma~\ref{lemma:fundamental} (Fig. \ref{fig:feasibility}.(2)). If the above expression is smaller than 0, it means that the hyperplane $\socc u \! + \! \socd \! =\! t$ intersects with the hyperboloid's negative sheet, $-\norm{\socA u \! + \! \socb}\!=\!t$, forming an ellipse, and cannot intersect with the positive sheet.
Therefore, when $\eigval > 0$, the SOCP \eqref{eq:gp-cbf-clf-socp} is feasible if and only if \eqref{eq:case2-add} is satisfied.

Case 3) Note that $\eigval\!=\!0$ means that the hyperplane $\socc u\! +\! \socd\! =\! t$ and the asymptote of $\norm{\socA u + \socb}\! =\! t$ have the same slope for some direction of $u$ (Fig. \ref{fig:feasibility}.(3)).
Define \vspace{-.3em}
\begin{equation}
    p \coloneqq \socd - \socc\left(\socA^T\socA\right)^{-1}\socA^T \socb^T. \vspace{-.3em}
\end{equation}
Then, the given condition \eqref{eq:case3-add} holds if and only if $\!p\!>\!0$. Consider $u=u_0$ from \eqref{eq:least-square-u} that minimizes $\norm{\socA u + \socb}$. Then, $p = \socd + \socc u_0 $. Let $\eigvec$ be the unit eigenvector of $F$ associated with the eigenvalue $\eigval=0$. Then, $\eigvec^T F \eigvec = 0$. Since $\socA^T \socA$ is positive definite, this means that $\socc \eigvec \neq 0$. Now let's consider a control input of the form \vspace{-.5em}
\begin{equation}
\label{eq:soc-case3-u}
    u = u_0 + \alpha \cdot \text{sgn}(\socc \eigvec) \cdot \eigvec, \quad \alpha > 0. \vspace{-.5em}
\end{equation}
Using this control law, the left-hand side of \eqref{eq:feas_lemma1_1} becomes \vspace{-1em}

\small
\begin{align}
\label{eq:soc-proof-sub1}
    & \{\socb^T \socb\!-\!\socd^2\!+\!2 h^T u_0\!+\!u_{0}^T F u_0\} - 2 \alpha \!\cdot\! p \!\cdot\! |\socc \eigvec|,
\end{align}
\normalsize

\noindent and the left-hand side of \eqref{eq:feas_lemma1_2} becomes \vspace{-.3em}
\begin{align}
\label{eq:soc-proof-sub2}
    % \socc(u_0 + k \eigvec) + \socd = 
    p + \alpha \cdot |\socc \eigvec|. 
    \vspace{-.8em}
\end{align}
When $p > 0$, there exists a sufficiently large $\alpha$ such that \eqref{eq:soc-proof-sub1} becomes non-positive and \eqref{eq:soc-proof-sub2} becomes positive. Therefore, from Lemma~\ref{lemma:fundamental}, the SOCP \eqref{eq:gp-cbf-clf-socp} is feasible. 

Note that the geometric interpretation of the condition $p\!>\!0$ is that the hyperplane $\socc u\! +\! \socd\! =\! t$,
which has the same slope as the asymptote of $\norm{\socA u\! +\! \socb}\! =\! t$
along $\eigvec$, should be placed over the asymptote in order for it to intersect the positive sheet of the hyperboloid.
At $u=u_0$, the asymptote $\norm{\socA(u-u_0)}\! =\! t$ takes value $t\!=\!0$. $p$ is the value of the hyperplane at $u\! =\! u_0$. Therefore, when $p \le 0$, the hyperplane is always under the positive sheet of the hyperboloid, and never intersects it. Consequently, the constraint \eqref{eq:constraint_std} is not feasible.

\addtolength{\textheight}{-12cm}   % This command serves to balance the column lengths
                                  % on the last page of the document manually. It shortens
                                  % the textheight of the last page by a suitable amount.
                                  % This command does not take effect until the next page
                                  % so it should come on the page before the last. Make
                                  % sure that you do not shorten the textheight too much.

%%%%%%%%%%%%%%%%%%%%%%%%%%%%%%%%%%%%%%%%%%%%%%%%%%%%%%%%%%%%%%%%%%%%%%%%%%%%%%%%

%%%%%%%%%%%%%%%%%%%%%%%%%%%%%%%%%%%%%%%%%%%%%%%%%%%%%%%%%%%%%%%%%%%%%%%%%%%%%%%%

%%%%%%%%%%%%%%%%%%%%%%%%%%%%%%%%%%%%%%%%%%%%%%%%%%%%%%%%%%%%%%%%%%%%%%%%%%%%%%%%

% % \balance
\bibliographystyle{IEEEtran}
\bibliography{reference.bib}{}
\end{document}